\pdfoutput=1
\documentclass[3p,times]{elsarticle}

\usepackage{ecrc}


\volume{00}

\firstpage{1}

\journalname{}

\runauth{Defu Lin et al.}


\jid{procs}

\jnltitlelogo{}

\CopyrightLine{2023}{Published by Elsevier Ltd.}




\usepackage{amssymb}
\usepackage{amsthm}





\usepackage[figuresright]{rotating}




\usepackage{amsmath}
\usepackage{amsthm}
\usepackage{graphics}
\usepackage{subfigure}
\usepackage{lineno,hyperref}
\usepackage{cases}
\usepackage{bm}
\usepackage{xcolor}
\usepackage{makecell}

\usepackage[linesnumbered,ruled,vlined]{algorithm2e}
\usepackage{hyperref}
\usepackage[noend]{algpseudocode}
\modulolinenumbers[1]

\newtheorem{theorem}{Theorem}

\newtheoremstyle{prestyle}
{0} 
{\topsep} 
{\itshape} 
{} 
{\bfseries} 
{.} 
{.5em} 
{} 
\theoremstyle{prestyle}

\theoremstyle{definition}	

\begin{document}

\begin{frontmatter}



\dochead{}

\title{Patterns Induce Injectivity: A New Thinking in Constructing Injective Local Rules of 1D Cellular Automata over $\mathbb{F}_2$}


\author[1]{Defu Lin}
\ead{2014580@mail.nankai.edu.cn}

\author[2]{Weilin Chen}
\ead{2120220681@mail.nankai.edu.cn}

\author[3]{Chen Wang}
\ead{2120220677@mail.nankai.edu.cn}

\author[4]{Junchi Ma}
\ead{majunchi@mail.nankai.edu.cn}

\author[5]{Chao Wang\corref{cor1}}
\ead{wangchao@nankai.edu.cn}

\address[1,2,3]{Address: College of Software, Nankai University, Tianjin 300350, China}

\cortext[cor1]{Corresponding author.}

\begin{abstract}
 We discovered that certain patterns called injective patterns remain stable during the revolution process, allowing us to create many reversible CA simply by using them to design the revolution rules. By examining injective patterns, we investigated their structural stability during revolutions. This led us to discover extended patterns and pattern mixtures that can create more reversible cellular automata. Furthermore, our research proposed a new way to study the reversibility of CA by observing the structure of local rule $f$. In this paper, we will explicate our study and propose an efficient method for finding the injective patterns. Our algorithms can find injective rules and generate local rule $f$ by traversing $2^{N}$, instead of $2^{2^{N}}$ to check all injective rules and pick the injective ones.
\end{abstract}

\begin{keyword}
	
	
	Cellular automata
	\sep One dimension
	\sep Injectivity
	\sep Efficient algorithm
        \sep Patterns

\end{keyword}

\end{frontmatter}


\section{Introduction\label{s1}}
Cellular Automata (CA) are mathematical dynamical systems consisting of a regular network of finite state cells that will change their state simultaneously according to the states of their neighbor under a local rule. As a complicated system, CA can be applied in encryption \cite{abdo2013cryptosystem}, cryptography \cite{nandi1994theory}, data compression \cite{lafe1997data}, image encoding \cite{cappellari2010resolution} and quantum computing \cite{toth2001quantum}, while many significant scenarios require CA to be reversible. 

To study self-reproducing automata, John Von Neumann first introduced Cellular Automata (CA) in \cite{neumann1966theory}, which was published in the 1950s. Hedlund developed the first topological study of reversible cellular automaton in \cite{hedlund1969endomorphisms}, establishing essential concepts for the one-dimensional CA, such as the uniform multiplicity of ancestors and Welch sets. Gregorio and Trautteur, in their 1975 work \cite{DIGREGORIO1975382}, investigated the reversibility, injectivity, and surjectivity of one-dimensional CA and specified the effective condition of the three properties. In the early 1980s, Stephen Wolfram led a new stage in the theoretical research of a cellular automaton by first proposing a one-dimensional elementary cellular automaton (ECA) with a state number of $2$ and a neighborhood radius of $1$ in his work \cite{wolfram1983statistical}. Tome proved necessary and sufficient conditions for null-boundary one-dimensional CA to exit and give his tree-based algorithms in \cite{Tome1994}. In linear CA, Yang proposed a method by constructing the deterministic finite automata (DFA) of CA to solve the reversibility problem in \cite{yang2015reversibility}. In the work  \cite{du2022efficient}, Du improved Yang's method by simplifying each node of DFA and gave a polynomial substitution named standard basis postfix (SBP) algorithm to calculate the reversibility of linear cellular automaton (LCA) under null boundary. Through studying number-conserving CA (NCCA), in their work \cite{WOLNIK2022133075}, Barbara and Maciej proposed a decomposition-based method, which can easily generate a relatively small set of local rules and then verify the reversible ones.

Amoroso and Patt proposed an algorithm to determine whether the global transformation is injective in their 1972 paper \cite{amoroso1972decision} and gave the proof that a nontrivial rule with form $0X1^{(m-1)}0$ is injective, where $m\in Z^{+}$. In 1976's work \cite{bruckner1979garden}, Bruckner proved that the necessary condition for the global transformation to be surjective and injevtive is ``balanced". Based on ``balanced", Maruoka and Kimura gave that the condition of the global transformation of CA of any dimension to be injective is ``hard" \cite{maruoka1976condition}. Klaus Sutner proposed an algorithm to use the de Bruijn diagram to decide the reversibility of cellular automata over the finite field $\mathbb{F}_2$ in 1991, which can reduce the complexity of the injective decision algorithm to quadratic complexity \cite{sutner1991bruijn}. 

However, the former algorithms are not good at quickly generating a batch of CA with injective global transition function due to the traverse process, which limits their computational complexity to the number $2^{2^{N}}$ when given a rule with neighbor size $N$. If the size is getting bigger, the computational cost of these methods is unbearable. For instance, when we need to generate a batch of reversible CA with $7$-neighbor, there will be $2^{2^{7}} = 2^{128}$ rules needed to be traversed over $\mathbb{F}_{2}$ in the worst situation, which is impossible for any computer to complete the calculation in a reasonable time.

To better address the problem, many researchers proposed their methods. Tim uses algebra structure to exhaustive listing reversible one-dimensional CA \cite{BOYKETT2004215} and proposed a method based on tree structure to calculate them. Although they use the pruning method to simplify the tree, their method still needs lots of calculation resources. Juan Carlos introduced a method to use the matrix to study the reversibility of one-dimensional CA which can calculate 6-states 2 neighbor size CA at most in \cite{SECKTUOHMORA2005134}, and based on it, they proposed a new algorithm to calculate a random, reversible cellular automaton with tens of states in paper \cite{seck2017welch}. 

In this paper, we found that if the local rule of CA conforms to some particular patterns, we can easily assert that the local rule is injective. By using these particular patterns for extension, and mixture, we can quickly generate a batch of reversible CA.

We organized the paper as this:
Section \ref{s1} introduces the background of cellular automata.
Then the definition and explanation of relevant basic knowledge are explained in Section \ref{s2}.
Based on previous sections, Section \ref{s3} gives some theorems studied in this paper and their proofs.
The algorithm for constructing nontrivial injective rules we propose in this paper will be explicated in Section \ref{s4}.
We give the results of our algorithm in section \ref{s5}.
Section \ref{s6} concludes and proposes future work.\par

\section{Mathematical description of Cellular Automata\label{s2}}

Usually, we use a quadruple $A = \{d,S,N,f\}$ to describe a CA as follows:
\begin{itemize}
    \item
    $d \in \mathbb{Z}_{+}$ denotes the dimension of the cellular automata space. A \emph{d}-dimensional space is marked $\mathbb{Z}^d$, and each point $\vec{n} \in \mathbb{Z}^{d}$ is called a cell.
    \item
    $S = \{0,1,\cdots,p-1\}$ is a finite set, which denotes the states of cell of CA.
    \item
    $N=(\vec{n}_1, \vec{n}_2,\cdots, \vec{n}_{m}) $ is the neighbor vector of size \emph{m}, where $\vec{n}_{i} \in \mathbb{Z}^{d} $, and $\vec{n}_i \neq \vec{n}_j $  when $i \neq j $ ($1 \leq i, j \leq m$). The \emph{m} neighbors of a cell $ \vec{n} $ can be expressed as $N (\vec{n}) = \vec{n} + \vec{n}_{i} (i=1,2,\cdots, m) $. 
    \item
    $f: S^m\to S$ is the local rule, which maps the current states of all neighbors of a cell to the next state of this cell. 
\end{itemize}

In this paper, we focus on the CA with $d=1$ over $\mathbb{F}_{2}$. We use $c: Z^d \to S$ to denote the configuration of CA, and $c^t$ represents the configuration of CA at time $t$, while $C (C=\{c|c: Z^d \to S\})$ is the set of all configuration. A local rule uniquely defines a global transformation which we denote as $\tau: C \to C$, and then $c^{t+1} = \tau(c^t)$ can denote the mapping process of CA, while $\tau$ is a global transformation. 

For a better explaining, here we apply a Corollary proposed in \cite{Kari2018ReversibleCA}, which enables us to utilize some properties of periodic one-dimensional CA to discuss our method instead of infinite one-dimensional CA. The Corollary is as follows:

we have among one-dimensional cellular automata $\tau \quad injective \iff \tau_{P} \quad injective$, where $\tau_{P}$ denotes the global transformation of periodic one-dimensional CA. 

Based on above, $shift_{k}(c) \iff c$ is obvious. The proof is mechanism moves $k$ steps of the index of the CA, so we leave it to interested readers.

Wolfram proposed a naming method that can succinctly represent a CA rule. We defined the local rule of primary CA as a binary sequence, $f_{i}$ is a binary number which is the result of the corresponding local map, $w$ is the rule number, then $w=\Sigma_{i=0}^{7}f_{i}*2^i$. The $w$ calculated by this rule is called the Wolfram number. For example, the local rule definition table with Wolfram rule number $240$ is shown in Table \ref{tab1}.

\begin{table}[h]
    \centering
    \caption{Local rule definition table with Wolfram number of $11110000$}
    \begin{center}
    \begin{tabular}{c|c|c|c}
      \hline
      Map String & State & Map String & State\\
      \hline
      000 & 0 & 100 & 1 \\
      001 & 0 & 101 & 1 \\
      010 & 0 & 110 & 1 \\
      011 & 0 & 111 & 1 \\
      \hline
    \end{tabular}
    \end{center}
  \label{tab1}
\end{table}

\newtheorem{definition}{Definition}
\begin{definition}
For any $c_1,c_2 \in C $, if $c_1 \neq c_2$, has $\tau(c_1) \neq \tau(c_2)$, then $\tau$ is injective. 
\end{definition}

\begin{definition}
\emph{Trivial Rule} is that for configuration $c^t$, if there is a global transformation $\tau$ determined by a local rule $f$, making $c^{t+1} = \tau(c^t) = shift_{k}(c^t)$, where $k \in \mathbb{Z}, k \neq 0$ represents the shift number of the configuration and if $k < 0$, it means the configuration shift to left, while $k > 0$ means the opposite. Moreover, if the transformation under local rule $f$ changes the corresponding cell state to the complement. The local rule $f$ is trivial as well.
\end{definition}

\begin{theorem}
Global transformation $\tau$ determined by a trivial rule $f$ is injective.
\end{theorem}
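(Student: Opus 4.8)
The plan is to prove something slightly stronger, namely that $\tau$ is a bijection, by writing down its inverse explicitly; injectivity is then immediate. By the definition of a trivial rule, the global map $\tau$ induced by $f$ has one of two forms: either $\tau(c) = shift_{k}(c)$ for some fixed $k \in \mathbb{Z}$ with $k \neq 0$, or $\tau(c)$ is obtained from $shift_{k}(c)$ by replacing every cell value $s \in \mathbb{F}_{2}$ by its complement $1 + s$. Write $\sigma_{k}$ for the shift-by-$k$ map on $C$ and $\kappa$ for cellwise complementation on $C$. Then $\tau = \sigma_{k}$ in the first case and $\tau = \kappa \circ \sigma_{k}$ in the second, so it suffices to show each of $\sigma_{k}$ and $\kappa$ is a bijection and then invoke that a composition of bijections is a bijection.

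First I would check $\sigma_{k}$ is a bijection on $C$. Unwinding the definition of the shift, $\sigma_{k}(c)$ is the configuration whose value at cell $\vec{n}$ is $c(\vec{n} - k)$ (with the sign convention fixed in the definition of $shift_{k}$). Hence $\sigma_{-k} \circ \sigma_{k} = \sigma_{k} \circ \sigma_{-k} = \mathrm{id}_{C}$, so $\sigma_{k}$ is invertible with inverse $\sigma_{-k}$; in particular, $\sigma_{k}(c_{1}) = \sigma_{k}(c_{2})$ forces $c_{1} = \sigma_{-k}(\sigma_{k}(c_{1})) = \sigma_{-k}(\sigma_{k}(c_{2})) = c_{2}$. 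Next I would check $\kappa$ is a bijection: over $\mathbb{F}_{2}$ we have $\kappa(c)(\vec{n}) = 1 + c(\vec{n})$ for every cell, whence $\kappa \circ \kappa = \mathrm{id}_{C}$, so $\kappa$ is its own inverse. Therefore in the first case $\tau$ is a bijection with inverse $\sigma_{-k}$, and in the second case $\tau$ is a bijection with inverse $\sigma_{-k} \circ \kappa$; either way $\tau$ is injective, proving the theorem. (If one prefers to argue at the level of periodic CA, the Corollary cited from \cite{Kari2018ReversibleCA} reduces injectivity of $\tau$ to that of $\tau_{P}$, and the same computation applies verbatim to finite periodic configurations.)

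There is essentially no obstacle here: the only points needing care are reading off from the definition that every trivial rule decomposes as a shift possibly followed by a global complement, and keeping the shift direction and sign convention consistent when exhibiting the inverse. The result is really just the observation that relabeling coordinates and flipping bits are both invertible operations.
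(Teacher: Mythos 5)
Your proof is correct; the paper in fact omits this proof entirely, stating only that it is ``obvious'' and left to the reader, and your argument --- decomposing $\tau$ as a shift $\sigma_{k}$ possibly composed with cellwise complementation $\kappa$, and exhibiting explicit inverses $\sigma_{-k}$ and $\kappa$ itself --- is exactly the standard argument the authors presumably have in mind. The only point worth flagging is that the paper's Definition~2 of a trivial rule is worded loosely (it is not entirely clear whether the complementation case is required to come with a nonzero shift), but your treatment is robust to either reading since $\sigma_{0}=\mathrm{id}_{C}$ is also a bijection.
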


The proof here is obvious, so we leave it to the interested readers.

However, the reversibility of CA with nontrivial rules is not apparent. Thus, we need to use an algorithm like Patt has proposed in \cite{amoroso1972decision} to investigate if the rule of CA is injective. Thus, we need to traverse all rules of a given neighbor size until we find enough reversible CA. The only problem is that the method needs to traverse $2^{2^{L+R+1}}$ elements in the worst case. For instance, when the neighbor's size is $7$, the calculation cost can be $2^{2^128} = 2^128$, which is unbearable.

Besides, we can also use the bigraph method proposed in \cite{seck2017welch} to generate some reversible CA. Using this method requires complex settings and may have hidden drawbacks in CA.

We focus on creating reversible CA by defining their local rules and boundary conditions. We discovered that non-trivial reversible CA have stable injective patterns.

\section{The Special Patterns\label{s3}}

We will define the prefix sub-strings and suffix sub-strings of map string $e_{-L}\cdots X\cdots e_R$, where $L, R \in Z^+ $, $L$ is its left radius, $R$ is its right radius, and $L$ and $R$ are not need to be equal.

\begin{definition}
    The \emph{Prefix Sub-strings} of a given map string $e_{-L} \cdots X \cdots e_R$ are $e_{-L}\cdots e_{-L+i}$, where $0 \leq i \leq L+R-1$. The \emph{Suffix Sub-strings} are $e_{R-j} \cdots e_{R}$, where $0 \leq j \leq L+R-1$. Examples of prefix sub-strings and suffix sub-strings are shown in Table \ref{tab2}. In order to simplify our writing, we take $X$ to represent $e_{0}$ and $\overline{e_{0}}$.
\end{definition} 

\begin{table}[h]
    \centering
	\caption{Prefix sub-strings and suffix sub-strings of a given map string.}

	\begin{center}

	\begin{tabular}{c|c}
		\hline
		Prefix sub-strings & Suffix sub-strings \\
		\hline
		$e_{-L}$ & $e_R$ \\
		$e_{-L}e_{-L+1}$ & $e_{R-1}e_R$ \\
        $\cdots$ & $\cdots$ \\
        $e_{-L}\cdots X$ & $X\cdots e_R$ \\
        $e_{-L}\cdots Xe_1$ & $e_{-1}X\cdots e_R$ \\
        $\cdots$ & $\cdots$ \\
        $e_{-L}\cdots X\cdots e_{R-1}$ & $e_{-L+1}\cdots X\cdots e_R$ \\
        \hline
	\end{tabular}
	\end{center}
    \label{tab2}
\end{table}

Since the pattern we built does not limit $L$ and $R$ to be equal, the prefix sub-strings and suffix sub-strings in the table do not correspond to each other in equal length. 

\begin{definition}
    \emph{Injective Pattern} is a map string $e_{-L}\cdots X\cdots e_R$ that every prefix sub-string of the map string, containing $X$, is unequal to its suffix sub-string with the same length, assuming $L \leq R$, while exchange the prefix and the suffix when $L > R$. 
\end{definition} 

For instance, given a map string with $L=1$ and $R=3$, the algorithm proposed in this paper can identify that the injective patterns are $0X011$, $0X110$, $1X001$ and $1X100$.

\begin{definition}
    \emph{Injective Pattern-Induced Rule $f$} is that given an injective pattern $e_{-L}\cdots X\cdots e_R$, and then we confirm two local maps of a local rule $f$ with $L+R+1$ neighbor size and $2^{L+R+1}$ local maps, that are $e_{-L}\cdots e_{0}\cdots e_R \to \overline{e_{0}}$, where $\overline{e_{0}} = 1-e_{0}$ over $\mathbb{F}_{2}$, and the local rule has $e_{-L}\cdots \overline{e_{0}}\cdots e_R \to e_{0}$ correspondingly, while other local maps will keep the form: $a_{-L}\cdots a_{0}\cdots a_R \to a_{0}$. Besides, the rule $f$ must be a ``balance’’ rule either. Moreover, we split the whole local maps (the number is $2^{L+R+1}$) of an injective pattern-induced local rule into two sets. One of which contains injective pattern corresponding local maps is called \emph{Pattern Set}, while the other contains other $2^{L+R+1}-2$ local maps is called \emph{Common Set}. 
\end{definition}

An example of injective pattern-induced local rule is $4278318856$ (Wolfram number, induced by injective patterns $0X011$), which has local maps $00011 \to 0$ and $01011 \to 1$, while other local maps of this rule will keep the corresponding state unchanged.

\begin{theorem}
A revolution local rule $f$ of a one-dimensional cellular automaton is injective if the local rule is induced by an injective pattern.
\end{theorem}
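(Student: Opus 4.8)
The plan is to exhibit an explicit cellular-automaton left inverse of the global map $\tau$, which yields injectivity (indeed reversibility) at once. By the corollary of \cite{Kari2018ReversibleCA} quoted above it would suffice to argue for periodic configurations, but the construction below works uniformly on all of $C$.

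Write the neighbourhood as $[-L,R]$, and for a configuration $c$ say \emph{the pattern occurs at $i$} if the non-central cells match, i.e. $c(i+j)=e_j$ for every $j\in[-L,R]\setminus\{0\}$; let $\Pi(c)$ denote the set of such $i$. Unwinding the definition of an injective pattern-induced rule, any map string whose non-central cells are not $e_{-L}\cdots e_{-1}e_{1}\cdots e_{R}$ is copied at the centre, while the two strings whose non-central cells do match get the centre flipped; hence
\[
\tau(c)(i)=c(i)\oplus\mathbf{1}\!\left[\,i\in\Pi(c)\,\right].
\]
The first step is the combinatorial lemma that activates the hypothesis: \emph{in any configuration the positions at which the pattern occurs are pairwise at distance greater than $\max\{L,R\}$.} Indeed, if the pattern occurred at $i$ and at $i+d$ with $1\le d\le\max\{L,R\}$, the two neighbourhood windows would overlap, and $c$ would have to agree on the overlap with both copies of $e_{-L}\cdots e_{R}$ except at the two centres, forcing $e_{v+d}=e_v$ for all $v\in[-L,R-d]\setminus\{0,-d\}$. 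But this is exactly the negation of the requirement (in the definition of an injective pattern) that the length-$(L+R+1-d)$ prefix be unequal to the length-$(L+R+1-d)$ suffix for every independent choice of the free symbol $X$ in each — the two excluded indices $v=0$ and $v=-d$ being precisely where those two $X$'s sit — and the case split $L\le R$ versus $L>R$ in the definition is what makes these prefix/suffix pairs cover every shift $d\le\max\{L,R\}$.

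The second step is the identity $\Pi(c)=\Pi(\tau(c))$. For ``$\subseteq$'': if $i\in\Pi(c)$ then by the lemma no $i+j$ with $0<|j|\le\max\{L,R\}$ lies in $\Pi(c)$, so $\tau(c)(i+j)=c(i+j)=e_j$ for all $j\in[-L,R]\setminus\{0\}$, i.e. $i\in\Pi(\tau(c))$. For ``$\supseteq$'': if $i\in\Pi(\tau(c))$ but $i\notin\Pi(c)$, pick $j_0\in[-L,R]\setminus\{0\}$ with $c(i+j_0)\ne e_{j_0}$; then $\mathbf{1}[\,i+j_0\in\Pi(c)\,]=c(i+j_0)\oplus\tau(c)(i+j_0)=c(i+j_0)\oplus e_{j_0}=1$, so $i+j_0\in\Pi(c)$, and by ``$\subseteq$'' also $i+j_0\in\Pi(\tau(c))$; but then $i,i+j_0\in\Pi(\tau(c))$ are at distance $|j_0|\le\max\{L,R\}$, contradicting the lemma applied to $\tau(c)$. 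Using the identity, $\tau(c)(i)=c(i)\oplus\mathbf{1}[\,i\in\Pi(c)\,]=c(i)\oplus\mathbf{1}[\,i\in\Pi(\tau(c))\,]$, so $c$ is recovered from $y=\tau(c)$ by the local rule $\sigma(y)(i):=y(i)\oplus\mathbf{1}\!\left[\,y(i+j)=e_j\ \text{for all}\ j\in[-L,R]\setminus\{0\}\,\right]$. Thus $\sigma\circ\tau=\mathrm{id}$ on $C$, so $\tau$ is injective; in fact $\sigma$ is literally the same rule as $\tau$, so $\tau$ is an involution.

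The main obstacle is the combinatorial lemma: one has to translate the prefix/suffix wording of the definition — which uses the convention that $X$ simultaneously denotes $e_0$ and $\overline{e_0}$ — precisely into the statement that the word $e_{-L}\cdots e_{-1}e_{1}\cdots e_{R}$ has no self-overlap of shift at most $\max\{L,R\}$, keeping careful track of the two independent free centres and of the asymmetric case $L\ne R$; everything after that is routine. As a sanity check I would verify the lemma on the worked example ($L=1$, $R=3$), where the no-self-overlap condition does single out exactly $0X011$, $0X110$, $1X001$, $1X100$.
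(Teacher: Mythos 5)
Your proof is correct, and its core idea coincides with the paper's: the prefix/suffix condition in the definition of an injective pattern is exactly a no-self-overlap condition on the word $e_{-L}\cdots e_{-1}e_1\cdots e_R$ with the two centres treated as wildcards, which forces the pattern occurrences to be structurally stable under $\tau$, whence $\tau^2=\mathrm{id}$ and injectivity. Where you genuinely go beyond the paper is in the second step. The paper's proof fixes a single occurrence of the pattern, checks that the overlapping windows in its Table of map strings cannot themselves match the pattern, and concludes that the non-central cells persist while the centre flips; this only gives the inclusion $\Pi(c)\subseteq\Pi(\tau(c))$ in your notation. It never addresses whether $\tau$ can \emph{create} a new occurrence of the pattern that was not present in $c$, nor how several occurrences interact -- yet $\tau(\tau(c))=c$ requires the full identity $\Pi(c)=\Pi(\tau(c))$. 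Your ``$\supseteq$'' argument (a spurious new occurrence at $i$ would force a flip at some $i+j_0$, hence an occurrence of $\Pi(c)$ within distance $\max\{L,R\}$ of $i$, contradicting the overlap lemma applied to $\tau(c)$) supplies exactly the missing half, and packaging the conclusion as an explicit left inverse $\sigma=\tau$ makes the reversibility claim immediate. One small point worth keeping in your write-up: the distance bound $\max\{L,R\}$ is the right one, since the definition only excludes self-overlaps of shift up to $\max\{L,R\}$ (the prefix must contain $X$), and that is precisely the range of shifts at which a non-central cell of one occurrence could serve as the centre of another.
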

\begin{proof}
Given a configuration $c$ of the CA, introduced in Theorem 3.1, it contains injective patterns $e_{-L} \cdots X \cdots e_R$, as follows:

\begin{equation*}
    c: \cdots a_{-L-1}e_{-L} \cdots X \cdots e_Ra_{R+1} \cdots, 
\end{equation*}
where $a_{i} \in S, i \in \mathbb{N}$ and $-\infty \leq i \leq -L-1$ or $R+1 \leq i \leq \infty$ represents the state of the cell in configuration $c$ that can be $0$ and $1$ over $\mathbb {F}_2$, while $e_{j} \in S, j \in \mathbb{N}$ and $-L \leq j \leq R$ is a specific state of the cell that must be $0$ or $1$. According to the definition of CA, we have map functions:
\begin{equation*}
    \begin{aligned}
        &f: S^{L+R+1} \to S, \\
        &f(a_{-L} \cdots a_{0} \cdots a_{R}) = a_{0}^{'},
    \end{aligned}
\end{equation*}
where $S$ is the state set of cells in the CA, the neighbor size is $L+R+1$ and the revolution local rule is $f$. So we get the map function of injective pattern $$f(e_{-L} \cdots X \cdots e_R) = X^{’},$$ where $X$ represents the cell state $e_{0}$ and $\overline{e_{0}}$, and Table \ref{tab:map_strings_of_IP} shows the map strings, whose corresponding cell (denoted as red) is every element of injective pattern $e_{-L} \cdots $ $X \cdots e_{R}$.

\begin{table}[ht]
    \centering
    \caption{The map strings of every element of injective pattern.}
    \begin{tabular}{c|c}
    \hline
        Map strings of elements before $X$ & Map strings of elements after $X$ \\
    \hline
        $a_{-L-1}e_{-L} \cdots \textcolor{red} {e_{-1}} \cdots e_{R-1}$ 
        & $e_{-L+1} \cdots \textcolor{red}{e_{1}} \cdots e_{R}a_{R+1}$  \\
        $a_{-L-2}a_{-L-1}e_{-L} \cdots \textcolor{red}{e_{-2}} \cdots e_{R-2}$ 
        & $e_{-L+2} \cdots \textcolor{red}{e_{2}} \cdots e_{R}a_{R+1} a_{R+2}$  \\
        $\cdots$ 
        & $\cdots$ \\
        $a_{-2L} \cdots a_{-L-1}\textcolor{red}{e_{-L}} \cdots e_{R-L}$ 
        & $e_{-L+R} \cdots \textcolor{red}{e_{R}} a_{R+1} \cdots a_{2R}$ \\
    \hline
    \end{tabular}
    \label{tab:map_strings_of_IP}
\end{table}

Since the revolution local rule $f$ is an injective pattern-induced rule, we have the map functions as follows.
\begin{equation*}
    \begin{aligned}
        &f(a_{-L} \cdots a_{0} \cdots a_{R}) = a_{0}^{'}, \\
        &a_{0}^{'} = \begin{cases}
            \overline{a_{0}}, & when \quad a_{-L} \cdots a_{0} \cdots a_{R} = e_{-L} \cdots X \cdots e_R \\
            a_{0}, & others
        \end{cases},
    \end{aligned}
\end{equation*}
where $a_{-L} \cdots a_{0} \cdots a_{R}$ is the map string we have introduced previously, that is the map strings shown in the Table \ref{tab:map_strings_of_IP} and the injective pattern $e_{-L} \cdots X \cdots e_R$. So, we have the map functions of these map strings as follows:
\begin{equation*}
    \begin{aligned}
        &f(a_{-2L} \cdots a_{-L-1}e_{-L} \cdots e_{R-L}) = e_{-L}^{'}, \\
        &\cdots \\
        &f(a_{-L-1}e_{-L} \cdots e_{-1} \cdots e_{R-1}) = e_{1}^{'}, \\
        &f(e_{-L} \cdots e_{0} \cdots e_{R}) = e_{0}^{'}, \\
        &f(e_{-L+1} \cdots e_{1} \cdots e_{R}a_{R+1}) = e_{1}^{'}, \\  
        &\cdots \\
        &f(e_{-L+R} \cdots e_{R}a_{R+1} \cdots a_{2R}) = e_{R}^{'}.
    \end{aligned}
\end{equation*}

In addition, the map strings in the Table \ref{tab:map_strings_of_IP} will not equal to the injective pattern $e_{-L} \cdots X \cdots e_{R}$ no matter what value $a_{i}$ takes, because of the definition of injective pattern. Thus, we get $e_{j}^{’} = e_{j}$, where $j \in \mathbb{N}$ and $-L \leq j \leq R$, except for $e_{0}^{’}$ which is $\overline{e_{0}}$. That is:
\begin{equation*}
    \begin{aligned}
        c^{t} &: \cdots a_{-L-1}e_{-L} \cdots e_{0} \cdots e_Ra_{R+1} \cdots, \\
        c^{t+1} = \tau(c) &: \cdots a_{-L-1}e_{-L} \cdots \overline{e_{0}} \cdots e_Ra_{R+1} \cdots, \\
        c^{t+2} = \tau(\tau(c^t)) = c^t &: \cdots a_{-L-1}e_{-L} \cdots e_{0} \cdots e_Ra_{R+1} \cdots,
    \end{aligned}
\end{equation*}
where $c$ represents the configuration of CA, while $t$ is time $t$ and so as $t+1$, $t+2$, and $\tau$ is the global transformation of CA. $f$ is injective and the total number configuration of the CA with injective pattern-induced local rule $f$ is $2$, which is a notable feature.
\end{proof}

\begin{definition}
    \emph{Extended Pattern} can be generated by adding $k$ elements to the left end of injective pattern and $h$ elements to the right end of injective pattern $e_{-L} \cdots X \cdots e_{R}$, where $k, h \in \mathbb{N}$.
\end{definition}

To better explain, we noted each element as $a$ that can represent all states in $S$ and the given injective pattern is $e_{-L} \cdots X \cdots e_{R}$, then we have $a_{-L-k}\cdots a_{-L-1}$ $e_{-L} \cdots X \cdots e_{R}$ $a_{R+1}\cdots a_{R+h}$, that is extended pattern. Since $h,k\in \mathbb{N}$, the extended pattern can be any length, we specify a local rule diameter $D = L+R+1+k+h$ here to assist us in studying the rule $f$.

We give a specific example here. A extended pattern based on injective pattern $0X011$ with $k=1, h=2$ is $a0X011aa$, which represents $0001100$, $0011100$, $00001101$, $00101101$, $00001110$, $00101110$, $00001111$, $00101111$, $10001100$, $10101100$, $10001101$, $10101101$, $10001110$, $10101110$, $10001111$, $10101111$, and will in the pattern set while others map strings will in the common set.

Similar to the injective pattern-induced local rule, we can also induce the corresponding local rule $f$ using the extended pattern and get the pattern set and the common set.
\begin{definition}
    \emph{Extended Pattern-Induced Rule $f$} is that as long as the map string denoted as $a_{1} \cdots X \cdots a_{D}$ equals to extended pattern $a_{-L-k}\cdots a_{-L-1}$ $e_{-L} \cdots X \cdots e_{R}$ $a_{R+1}\cdots a_{R+h}$, have $f(a_{-L-k}\cdots a_{-L-1}$ $e_{-L} \cdots X \cdots e_{R}$ $a_{R+1}\cdots a_{R+h}) = \overline{x}$, where $X$ represents $e_{0}$ or $\overline{e_{0}}$. otherwise the state will be still, that is $f(a_{1} \cdots X \cdots a_{D}) = X$.
\end{definition}

For example, given an extended pattern $10X1a$, which is extended by injective pattern $10X1$, the Wolfram number of extended pattern-induced local rule $f$ is $1007612144$, corresponding to binary number $111100000011101111000011110000$. If and only if the map string is equal to extended pattern $10X1a$, the state of $X$ will change after mapping, where $X, a \in \{0,1\}$, so we have $10X10 \to \overline{X}, 10X11 \to \overline{X}$. Otherwise, the state will be stable.
\begin{theorem}
    An extended pattern-induced local rule $f$ of the CA is injective.
\end{theorem}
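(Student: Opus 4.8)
# Proof Proposal for Theorem 3.3

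The plan is to reduce Theorem 3.3 to Theorem 3.1 by showing that an extended pattern is, for the purposes of the injectivity argument, just an injective pattern with extra "don't care" cells padded on. First I would observe that the extended pattern $a_{-L-k}\cdots a_{-L-1}\,e_{-L}\cdots X\cdots e_R\,a_{R+1}\cdots a_{R+h}$ contains the injective pattern $e_{-L}\cdots X\cdots e_R$ as its "core" substring (the one straddling the center cell $X$). The key combinatorial fact to establish is that every prefix sub-string of the extended pattern that contains $X$ is still unequal to its suffix sub-string of the same length. For the short sub-strings — those whose length does not exceed $L+R+1$ — this is exactly the injective-pattern property of the core already guaranteed by Definition 3.3, since such a sub-string of the extended pattern is precisely a sub-string of the core (the padding cells $a_i$ do not enter until the length exceeds the core length). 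For longer sub-strings, the prefix/suffix comparison already fails inside the core portion, so the inequality is inherited regardless of what the padding cells $a_i$ are.

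Next I would mirror the argument of the proof of Theorem 3.1. Take any configuration $c$ containing the extended pattern, write out the map strings of the cells lying inside the core $e_{-L}\cdots X\cdots e_R$ (as in Table \ref{tab:map_strings_of_IP}), and invoke Definition 3.5: the rule $f$ flips the center cell only when the full diameter-$D$ window equals the extended pattern, and is the identity on the center cell otherwise. The crucial check is that the map string of each non-center cell of the core — a window of diameter $D$ now, not $L+R+1$ — cannot equal the extended pattern. This follows from the previous paragraph: such a window, restricted to its sub-block around its own center, reproduces a prefix (or suffix) sub-string of the extended pattern that by construction differs from the corresponding suffix (or prefix), so the window as a whole differs from the extended pattern. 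Hence each core cell other than $X$ is fixed by $\tau$, the cell at $X$ is flipped, and therefore $\tau(\tau(c)) = c$ on this configuration, so $\tau$ is an involution and in particular injective.

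The main obstacle I anticipate is bookkeeping the index shifts correctly: with diameter $D = L+R+1+k+h$, a cell $e_{-i}$ of the core has a diameter-$D$ neighborhood window that is no longer "centered" in a clean way relative to the extended pattern's own indexing, and one must carefully argue that comparing that window to the extended pattern amounts to comparing a length-$D$ prefix/suffix pair of the extended pattern — and that the relevant such pair contains the distinguishing position inside the core. A clean way to handle this is to note that the extended pattern-induced rule's "pattern set" (Definition 3.5, illustrated by the example after Definition 3.4) is exactly the set of diameter-$D$ windows whose restriction to the core positions equals the core injective pattern while the padding positions are arbitrary; then the non-equality of a shifted window to the extended pattern is decided purely by whether its core-restriction equals $e_{-L}\cdots X\cdots e_R$, which is governed by the injective-pattern property of the core. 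Once that reduction is made explicit, the rest of the proof is a verbatim repetition of the $c \to \tau(c) \to \tau(\tau(c)) = c$ computation from Theorem 3.1. A minor additional remark worth including is that, unlike in Theorem 3.1 where the total number of configurations fixed setwise by the orbit is $2$, here the orbit structure is still a disjoint union of $2$-cycles and fixed points, which already suffices for injectivity.
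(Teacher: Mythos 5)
Your proposal is correct and follows essentially the same route as the paper: observe that a diameter-$D$ window lies in the pattern set exactly when its core positions match the injective pattern $e_{-L}\cdots X\cdots e_{R}$ (the padding positions being arbitrary), so every non-center core cell is fixed and the center is flipped, and then repeat the $c\to\tau(c)\to\tau(\tau(c))=c$ computation of the injective-pattern theorem. One caveat: the ``key combinatorial fact'' in your first paragraph is overstated --- prefix sub-strings of the extended pattern begin inside the arbitrary padding cells $a_{i}$, so they are not sub-strings of the core, and for shifts exceeding $L+R$ a prefix containing $X$ \emph{can} coincide with the equal-length suffix --- but that lemma is never actually needed, since the reduction in your third paragraph (only shifts $|s|\leq\max(L,R)$ are relevant, and for those the two cores overlap in a block containing an $X$, where the injective-pattern property forces a mismatch) is what carries the argument, exactly as in the paper's proof.
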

\begin{proof}
    Similar to the proof of Theorem 3.1, we first define a CA with an extended pattern-induced rule $f$. As we introduced previously, the neighbor size is $D$ and the map functions are as follows:

\begin{equation*}
    \begin{aligned}
        &f: S^{D} \to S, \\
        &f(a_{1} \cdots X \cdots a_{D}) = X^{'},
    \end{aligned}
\end{equation*}
where $S$ is the state set of cells, and the neighbor size is $D$. The map string $a_{1} \cdots X \cdots a_{D}$ is equal to $a_{-L-k} \cdots X \cdots a_{R+h}$, and $e_{-L} \cdots X \cdots e_{R}$ is injective pattern, if and only if $a_{-L-k} \cdots X \cdots a_{R+h} = a_{-L-k} \cdots a_{-L-1}e_{-L} \cdots X \cdots e_{R} a_{R+1} \cdots a_{R+h}$, we get a map function:
\begin{equation*}
    \begin{aligned}
        &f(a_{-L-k} \cdots a_{-L-1}e_{-L} \cdots X \cdots e_{R}a_{R+1} \cdots a_{R+h}) = e_{0}^{'}, \quad e_{0}^{'} = \overline{e_{0}}. \\ 
    \end{aligned}
\end{equation*}

According to the extended pattern-induced rule and Theorem 3.1, we know the result. Then, the results of the map functions of the CA are as follows: 
\begin{equation*}
    \begin{aligned}
        &f(a_{-2L-k} \cdots a_{-L-1} e_{-L} \cdots e_{R} a_{R+1} \cdots a_{R+h-L}) = e_{-L}^{'},\\
        & \cdots \\
        &f(a_{-L-k-1} \cdots a_{-L-1} e_{-L} \cdots e_{-1} \cdots e_{R}a_{R+1} \cdots a_{R+h-1}) = e_{-1}^{'}, \\
        &f(a_{-L-k} \cdots a_{-L-1}e_{-L} \cdots e_{0} \cdots e_{R} a_{R+1} \cdots a_{R+h}) = e_{0}^{'}, \\
        &f(a_{-L-k+1} \cdots a_{-L-1}e_{-L} \cdots e_{1} \cdots e_{R}a_{R+1} \cdots e_{R+h+1}) = e_{1}^{'}, \\
        & \cdots \\
        &f(a_{-L-k+R} \cdots a_{-L-1}e_{-L} \cdots e_{R} a_{R+1} \cdots a_{2R+h}) = e_{R}^{'}. \\
    \end{aligned}
\end{equation*}

Since every map string shown before is unequal to extended pattern $a_{-L-k}\cdots $ $ a_{-L-1}e_{-L} \cdots e_{0}$ $ \cdots e_{R} a_{R+1} \cdots a_{R+h}$, no matter what value $a_{i}$ takes, we have the results that $e_{j}^{’} = e_{j}$, where $j \in \mathbb{N}$ and $-L \leq j \leq R$, except $e_{0}^{’} = \overline{e_{0}}$. We have proof that even in the extended pattern, the injective pattern part can keep its structure in mapping. 

In the next step, we discuss the extended part of the extended pattern. The extended parts have two situations, which are 1) strings that can formulate injective pattern $a_{-L-k} \cdots a_{-L-1}e_{-L} \cdots X \cdots e_{R}$ $a_{R+1} \cdots a_{R+h}$ and 2) strings that can not formulate injective pattern. In the first situation, we can have the same solution as $f(a_{-L-k} \cdots a_{-L-1}e_{-L} \cdots e_{0} \cdots e_{R} $ $ a_{R+1} \cdots a_{R+h}) = \overline{e_{0}}$ as long as we take the extended pattern as a window and move it to the corresponding position, and due to Definition 3, the injective pattern will always keep its structure, while in the second situation, we have the solution shown as previous excepts the solution mentioned in the first situation, that is, the results are $e_{j}^{’} = e_{j}$ in all time, except for $e_{0}^{’}$ which is $\overline{e_{0}}$ mentioned in 1). 

As a result, we have the conclusion that $\tau(\tau(c)) = shifte_{k}(c)$, and apparently, the local rule $f$ is injective.
\end{proof}
So far, we introduced two patterns that can induce injective local rules respectively. We discovered a unique property that allows them to maintain the "patterns" structure during the mapping process. The state of the corresponding cell can only change in the next cycle if the local map matches the "patterns.". So, we define this property as follows, and discuss a new way to generate more injective rules.

\begin{definition}
    \emph{Pattern Independence} is that given a set of injective patterns, 1) for the patterns with the same length, the prefix sub-strings of each pattern, containing $X$, are unequal to the corresponding length suffix sub-strings of all other patterns. 2) For the patterns with different lengths, it should conform to situation 1) at first, and also require the shorter injective patterns not equal to any of the same length sub-strings of the longer injective patterns.
\end{definition}

To better explain, in situation 1), take $e^{j}_{-L-k} \cdots X \cdots e^{j}_{R+h}, j = 1,2,\cdots, k$ to represent the injective patterns, the property is that:
\begin{equation*}
    \begin{aligned}
        &e^{j}_{-L-k} \cdots X \cdots e^{j}_{r} \neq \begin{cases}
            e^{1}_{R+h-(r+L+k+1)} \cdots e^{1}_{R+h}, \\
            \cdots \\
            e^{j-1}_{R+h-(r+L+k+1)} \cdots e^{j-1}_{R+h} \\
            e^{j+1}_{R+h-(r+L+k+1)} \cdots e^{j+1}_{R+h} \\
            \cdots \\
            e^{k}_{R+h-(r+L+k+1)} \cdots e^{k}_{R+h} \\
        \end{cases},
    \end{aligned}
\end{equation*}
where $r\in \mathbb{Z}, 0 \leq r < R+h$, and assume $R+h \geq L+k$ (if $R+h \leq L+k$, the “$\neq$” situation is vice versa).

In situation 2), take $e^{j}_{-L1} \cdots X \cdots e^{j}_{R1}$ to be the longer injective patterns and $e^{j}_{-L2} \cdots X \cdots e^{j}_{R2}$ to be the shorter injective patterns, where $j = 1,2,\cdots, k$ and $L1, L2, R1, R2 \in \mathbb{Z}, 0 \leq L1, L2 \leq L+k, 0 \leq R1, R2 \leq R+h$, and that is as follows:
\begin{equation*}
    \begin{aligned}
        &e^{j}_{-L1} \cdots X \cdots e^{j}_{r} \neq e^{j_1}_{L2-(L1+r+1)} \cdots e^{j_1}_{L2}, \quad j_1 = 1,2,\cdots,k, j_1 \neq j, \\
        & e^{j}_{-L1} \cdots X \cdots e^{j}_{R1} \neq 
        Substring(e^{j_1}_{-L2} \cdots X \cdots e^{j_1}_{R2}), \quad j_1 = 1,2,\cdots,k, j_1 \neq j,
    \end{aligned}
\end{equation*}
where $r\in \mathbb{Z}, 0 \leq r < R1$, and assume $R1 \geq L1$ (if $R1 \leq L1$, the “$\neq$” situation is vice versa), and $j_1, j_1 =1,2,\cdots,k, j_1 \neq j$ means we traverse $k-1$ patterns. $Substring$ means to get the substring with the same length as the shorter injective patterns. Longer and shorter injective patterns have the same length, left radius, and right radius, and satisfy pattern independence.

According to \textbf{Definition 7}, we can consider the injective patterns and extended patterns to be self-independence.

\begin{definition}
    \emph{Patterns Mixture} is a pattern set including injective patterns and extended patterns that satisfy pattern independence. We use \emph{Mixture Set} to denote the given set. 
\end{definition} 

Assume that the rule diameter is $D$, and $D$ is the longest length of all patterns. Then, the patterns mixture includes situations as follows.

\begin{itemize}
    \item [1)] All elements of the mixture are \textit{injective patterns} with $D$ length.

    \item [2)] The mixture set consists of \textit{injective patterns} and \textit{extended patterns}, and all of them are of $D$ length.

    \item [3)] All elements of the mixture set are \textit{extended patterns} with a diameter $D$, which are extended by injective patterns with a diameter smaller than $D$.
\end{itemize}

Similar to the previous, we can induce local rules based on the patterns mixtures. The definition of patterns mixture-induced rule $f$ is as follows.
\begin{definition}
    \emph{Patterns Mixture-Induced Rule $f$}, of $D$ length and with $2^{D}$ local maps, is that if and only if the map strings of local maps in the mixture set, the state of $X$ will change during the mapping process, that is $e_{1}\cdots X\cdots e_{D} \to X^{'}$, where $e_{1}\cdots X\cdots e_{D} \in Mixtrue Set$ and $X^{'}$ represents the changed state of the corresponding cell.
\end{definition}

\begin{theorem}
     A patterns mixture induced rule $f$ of the CA is injective.
\end{theorem}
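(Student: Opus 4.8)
# Proof Proposal for Theorem 3.3 (Patterns Mixture-Induced Rule is Injective)

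The plan is to generalize the argument used in the proofs of Theorems 3.1 and 3.2, showing that under a patterns mixture-induced rule $f$, every cell whose surrounding map string matches some pattern in the mixture set flips, while every other cell stays fixed, so that $\tau\circ\tau$ is the identity (or a shift composed with itself), whence $\tau$ is injective. The key new ingredient that must be exploited is \textbf{Definition 7 (Pattern Independence)}: we need it precisely to guarantee that the flips triggered by different patterns do not interfere with one another.

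First I would fix an arbitrary configuration $c^{t}$ and mark every position $i$ at which the length-$D$ window centered appropriately around $i$ equals one of the patterns $P^{(1)},\dots,P^{(k)}$ in the mixture set; call these the \emph{active positions}. By Definition 10, applying $\tau$ complements exactly the $X$-cell of each active window and leaves all other cells unchanged, so $c^{t+1}=\tau(c^{t})$ differs from $c^{t}$ only at the active positions. The crucial claim is then: \emph{the set of active positions of $c^{t+1}$ equals the set of active positions of $c^{t}$}. Granting this claim, each active cell is complemented again when we apply $\tau$ a second time and every other cell is again untouched, so $c^{t+2}=\tau(\tau(c^{t}))=c^{t}$ (more precisely $\tau(\tau(c))=\mathrm{shift}_{k}(c)$ as in Theorem 3.2 when trivial shift components are present), and hence $\tau$ is injective.

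To prove the claim I would argue as in Theorem 3.1: look at an active position with pattern $P^{(j)}=e_{-L}\cdots X\cdots e_{R}$ (or its extended version). The windows centered at the neighboring cells $e_{-L},\dots,e_{-1},e_{1},\dots,e_{R}$ are exactly the prefix/suffix sub-strings of $P^{(j)}$ padded by outside symbols (Table 5 in the Theorem 3.1 proof, suitably extended). Definition 2 (injective pattern) already tells us none of these padded sub-strings equals $P^{(j)}$ itself; Pattern Independence, situations 1) and 2), tells us none of them equals any \emph{other} pattern $P^{(j')}$ in the mixture set either — this covers both equal-length patterns and the shorter-versus-longer case. Therefore none of the neighbors of an active cell is itself active, so complementing the $X$-cell cannot destroy the match (the $X$-cell is precisely the one allowed to differ, and after flipping $X$ to $\overline X$ the window still matches $P^{(j)}$ by the ``$X$ represents $e_0$ or $\overline{e_0}$'' convention in Definition 10) and cannot create a new match at a neighboring position. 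A symmetric argument shows no active position is lost. Finally, positions far from every active cell see an unchanged window and stay inactive.

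The main obstacle I anticipate is the bookkeeping for situation 3) of the mixture (extended patterns of common diameter $D$ extended from injective cores of smaller diameter): here two distinct patterns may overlap in a window without either's $X$-cell lying in the other's window, and one must check that the ``$\neq$'' conditions of Definition 7 genuinely rule out a flipped $X$-cell of one pattern falling inside — and matching — another pattern's window. This is exactly what the $Substring(\cdot)$ clause of situation 2) is designed to prevent, so the proof reduces to carefully matching each potential interference against the corresponding inequality in Definition 7; it is routine but notation-heavy, and I would present it by the same window-sliding picture used in the proof of Theorem 3.2 rather than by an exhaustive case analysis.
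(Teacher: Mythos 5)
Your proposal follows essentially the same route as the paper: a window-sliding argument showing that pattern independence prevents any flip from creating or destroying a match at an overlapping position, so the set of matching (active) positions is invariant, the patterns' structure is preserved, and $\tau(\tau(c))=\mathrm{shift}_k(c)$ forces injectivity. Your packaging via an explicit ``active positions are invariant'' claim is a somewhat cleaner organization than the paper's three-case discussion, but the key ideas and the deferred bookkeeping (overlaps of shorter and longer patterns, handled by the sub-string clauses of Definition 7) are the same.
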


\begin{proof}
    
Assume there are $k, k \in \mathbb{Z}^{+}$ patterns with different lengths in the mixture set, while the max length $D$. The patterns in the mixture set conform to pattern independence, and according to Definition 8, this is \textit{ patterns mixture}. Given a one-dimensional CA with patterns mixture induced rule $f$ will have map functions:
\begin{equation*}
    \begin{aligned}
        &f: S^{D} \to S, \\
        &f(a_{1} \cdots X \cdots a_{D}) = X^{'},
    \end{aligned}
\end{equation*}
where $D$ is the neighbor size, $S$ is the state set, and $a_{1} \cdots X \cdots a_{D}$ represents all possible states of map strings with a given length, that is $a_{1} \cdots X \cdots a_{D} \in S^{D}$, while $f$ is local rule. Based on the different situations of elements in the mixture set, we give our proof respectively, as follows.

\begin{itemize}
    \item[1)] Every element in the mixture set is an injective pattern with $D$ neighbor size, denoted as $e^{j}_{-L-k} \cdots X \cdots e^{j}_{R+h}, j = 1,2,\cdots,k$. 

    \item[2)] The elements in mixture set are all $D$ size,  including injective patterns and extended patterns. We denote the injective patterns as $e^{j}_{-L-k} \cdots X \cdots e^{j}_{R+h}, j = 1,2,\cdots,k$. and since the extended part of the extended patterns can represent any state, we only need to consider the injective part, which we proved in Theorem 3.2, and denoted them as $e^{j}_{-L} \cdots X \cdots e^{j}_{R}, j = 1,2,\cdots,k$.

    \item[3)] The elements in the mixture set are all extended patterns with size $D$.
\end{itemize}

For \textbf{situation 1)}, in the mixture set, there are $k$ injective patterns that satisfy patterns independence, so we have the map functions as follows:
\begin{equation*}
    \begin{aligned}
        &f(a_{-L-k} \cdots a_{0} \cdots a_{R+h}) = a_{0}^{'}, \\  
        &a_{0}^{'} = \begin{cases}
            \overline{a_{0}}, & a_{-L-k} \cdots a_{0} \cdots a_{R+h} \in Mixture Set \\
            a_{0}, & others
        \end{cases},
    \end{aligned}
\end{equation*}
where $a_{-L-k} \cdots a_{0} \cdots a_{R+h}$ is a string window with $D$ length on the configuration $c$ of the given CA, shown in Fig. \ref{fig:Injective_patterns_in_PM}. The blue box is the window and the red cell represents the corresponding cell of this window. 

\begin{figure}[ht]
    \centering
    \includegraphics[scale=0.4]{./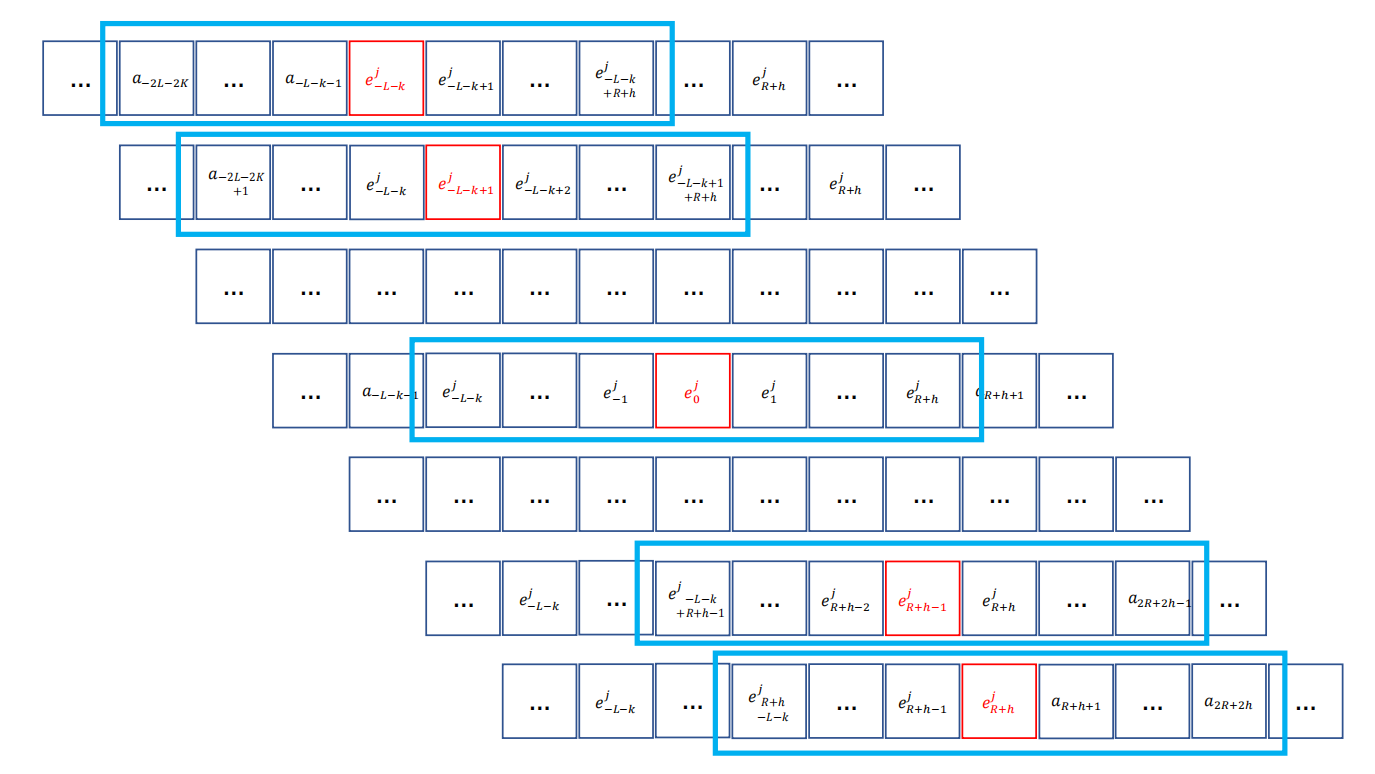}
    \caption{The window of given CA with patterns mixture induced rule $f$ while all elements in mixture set are injective patterns.}
    \label{fig:Injective_patterns_in_PM}
\end{figure}

According to the definition, as long as the string in the window equals to $e^{j}_{-L-k} \cdots $ $ X \cdots e^{j}_{R+h}, j = 1,2,\cdots,k$, where $X$ can represent both $e^{j}_{0}$ and $\overline{e}^{j}_{0}$, which is the injective pattern in the mixture set, the state of the corresponding cell can change after mapping, or the state will be still. 

In addition, no matter what value the $a_{i}$ take, the content in the window will not equal to $e^{j}_{-L-k} \cdots $ $ X \cdots e^{j}_{R+h}, j = 1,2,\cdots,k$, where $i \in \mathbb{Z}, -2L-2k \leq i \leq -L-k-1 \quad or \quad R+h+1 \leq i \leq 2R+2h$, so in the Fig. \ref{fig:Injective_patterns_in_PM}, only one window can change the state of the corresponding cell, and this situation can be extended to the whole configuration $c$ and all injective patterns, which are in the mixture set. That is, the injective patterns will keep their structure always. In the next time, the situation is the same. So, we have $\tau(\tau(c)) = shift_{k}(c)$, which has proven the patterns mixture induced rule $f$ is injective.

For \textbf{situation 2)}, the mixture set will have both injective patterns and extended patterns. However, for extended patterns, we only need to consider the effectiveness between the shorter injective patterns which construct the extended patterns, and the longer injective patterns. Similar to the \textbf{situation 1)}, we define a window with the same length with the shorter injective patterns, and the corresponding cell will just on the same situation as the first cell of the longer injective patterns, and then move the window one cell a step to traverse all cell of injective patterns, shown as Fig. \ref{fig:extended_patterns_in_PM}. 

\begin{figure}[ht]
    \centering
    \includegraphics[scale=0.2]{./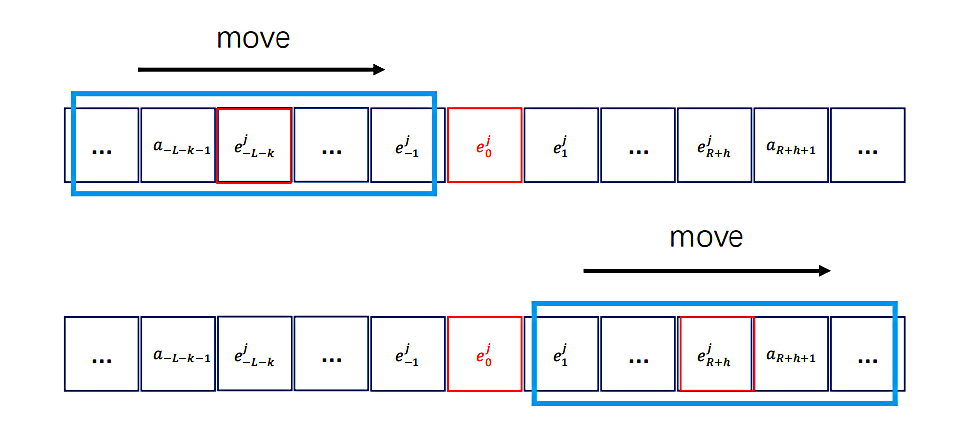}
    \caption{The window of given CA with patterns mixture induced rule $f$ while the mixture set includes both injective patterns and extended patterns.}
    \label{fig:extended_patterns_in_PM}
\end{figure}

According to \textbf{Definition 8}, the content of the window in any step will have no chance to equal to the shorter injective patterns, which means that the shorter injective patterns can not break the structure of the longer injective patterns, while if do the same operation with the window change to the length of the longer injective patterns and the corresponding cell in the first cell of the shorter injective patterns, the longer injective patterns can not break the structure of the shorter injective patterns as well. In addition, we have proven injective patterns with the same length can not influence each other to break their structure in \textbf{situation 1)}.

To sum up, firstly, given a CA with patterns mixture induced rule $f$, and define numerous windows with different lengths that can represent all lengths of the injective patterns with $D$ length or injective patterns which construct the $D$ length extended patterns in the mixture set. Then, use the windows to traverse the initial configuration $c$ of the CA, if the content of the windows is equal to any injective patterns mentioned latest, according to Definition 9, the state of the corresponding cell of the window will change in the next time, while the structure of mentioned latest injective patterns will remain, since all of them will not break the structure of others which we have proven previously. Do the same operation again, the changed state cell will change its state as well, that is, the same as the initial configuration. So, we have $\tau(\tau(c)) = shift_{k}(c)$, and apparently, the local rule $f$ is injective.

For \textbf{situation 3)} the proof is the same as \textbf{situation 2)}, which only needs to change the length of the window, and the conclusion is the same as \textbf{situation 2)} too. Thus, in all situations, the local rule $f$ will ensure the configuration of the CA has $\tau(\tau(c)) = shift_{k}(c)$, so the local rule $f$ is injective. 
\end{proof}

\section{Algorithms\label{s4}}
Firstly, we give an algorithm to check the pattern independence named Algorithm \ref{PID}.

\begin{algorithm}[H]
    \caption{Patterns Independence Detection}
    \label{PID}
    \SetKwData{True}{True}
    \SetKwData{False}{False}
    \SetKwFunction{PrefixSubstring}{PrefixSubstring}
    \SetKwFunction{SuffixSubstring}{SuffixSubstring}
    \SetKwFunction{Substring}{Substring}
    
    \SetKwInOut{Input}{input}
    \SetKwInOut{Output}{output}
    \Input{$PatternShort$(injective pattern), $L_1, R_1$(left radius and right radius); \\ $PatternLong$(injective pattern), $L_2, R_2$(left radius and right radius)}
    \Output{\True or \False}
    \If{$PatternShort$ equals $PatternLong$}{
        $DetectionType \gets InjectivePattern$;
    }
    \For{$length \gets L_1$ \KwTo $L_1 + R_1$}{
        \If{\PrefixSubstring{$PatternShort$, $length$} equals \SuffixSubstring{$PatternLong$, $length$}}{
            return \False;
        }
    }
    \eIf {$DetectionType$ is $InjectivePattern$}{
        return \True
    }{
        \For {$begin \gets -L_2$ \KwTo $R_2 - (L_1 + R_1 +1)$} {
            \If {$PatternShort$ equals \Substring {$PatterLong$, $begin$, $begin + L_1+R_1+1$}} {
                return \False
            }
        }
        return \True
    }
\end{algorithm}

In Algorithm \ref{PID}, the function $PrefixSubstring$ and $SuffixSubsring$ are defined in \textbf{Definition 3}. The first augment is the pattern, and the second augment is the length. Furthermore, the function $Substring$ has $3$ augments as input. The first one is a pattern, the second one is the beginning position and the last one is the end position.

\textbf{Complexity Analysis}. From Algorithm \ref{PID}, we get that the main cost will be the two ``$for$'' loops and the comparison process, concerning the length of input patterns. Denote the length of patterns to be $n$, then the complexity will be $O(n^2)$ in the worst situation.

Secondly, we give an algorithm to generate all injective patterns of a given length. 

\begin{algorithm}
    \caption{Injective Patterns Generation}
    \label{IPG}
    \SetKwData{InjectivePatternSet}{InjectivePatternSet}
    \SetKwData{True}{True}
    \SetKwData{False}{False}
    \SetKwData{MapString}{MapString}
    \SetKwFunction{PID}{PID}
    \SetKwFunction{RMTToMapstring}{RMTToMapstring}
    \SetKwInOut{Input}{input}
    \SetKwInOut{Output}{output}
    \Input{$L$(left radius), $R$(right radius)}
    \Output{\InjectivePatternSet}
    \InjectivePatternSet $\gets \varnothing$ \\
    \For {$RMT \gets 0$ \KwTo $2^{L+R+1} - 1$} {
        \MapString $\gets$ \RMTToMapstring{$RMT$} \\
        \If {\PID{\MapString, \MapString} is \True} {
            \InjectivePatternSet $\gets$ \InjectivePatternSet $+ \{$ \MapString $\}$
        } 
    }
    return \InjectivePatternSet
\end{algorithm}

In Algorithm \ref{IPG}, the function $PID$ is defined in the Algorithm \ref{PID}.

\textbf{Complexity Analysis}. The complexity of the function $PID$ is $O(n^2)$. Since we define $n$ as the length of the pattern, we know $n = L + R + 1$ and the function $IPG$, defined in Algorithm \ref{IPG}, is an ``$for$'' loop with an exponential number, so the complexity is $O(n^{2} 2^{n})$.

Then we can use the results of Algorithm \ref{IPG} to generate injective pattern-induced rules, while each element of the results set can generate an injective rule $f$. The algorithm here can be easily designed as follows.

\begin{algorithm}
    \caption{Injective Rule Generation}
    \label{IRG}
    \SetKwData{Pattern}{Pattern}
    \SetKwData{InjectiveRule}{InjectiveRule}
    \SetKwData{MapString}{MapString}
    \SetKwData{LocalMap}{LocalMap}
    \SetKwFunction{RMTToMapstring}{RMTToMapstring}
    \SetKwFunction{Substring}{Substring}
    \SetKwFunction{CharToNumber}{CharToNumber}
    \SetKwInOut{Input}{input}
    \SetKwInOut{Output}{output}
    \Input{\Pattern, $L$(left radius), $R$(right radius)}
    \Output{\InjectiveRule $f$}
    $f \gets \varnothing$ \\
    \For {$RMT \gets 0$ \KwTo $2^{L+R+1} - 1$} {
        \MapString $\gets$ \RMTToMapstring{$RMT$} \\
        \eIf {\MapString equals \Pattern} {
            \LocalMap $\gets$ \MapString : $1 -$ \CharToNumber{\Substring{\MapString, 0, 0}}
        } {
            \LocalMap $\gets$ \MapString : \CharToNumber{\Substring{\MapString, 0, 0}}
        }
        $f$ $\gets$ $f + \{$ \LocalMap $\}$
    }
    return \InjectiveRule $f$
\end{algorithm}

\textbf{Complexity Analysis}. The main cost is the traverse with exponential number and the comparison process. Thus, the complexity is $O(n2^{n})$, and we can see from the Algorithm \ref{IRG} that each injective pattern can generate an injective rule. The function $CharToNumber$ can turn a char into an integer number.

According to \textbf{Definition 6}, we can easily extend the injective patterns we get by Algorithm \ref{IPG}, and the extended pattern-induced injective rules can be generated by Algorithm \ref{IRG} too. Besides, by \textbf{Definition 9}, the extend patterns and other injective patterns with the same length of extend patterns, can generate a combinatorial number of patterns mixtures and generate a corresponding number of injective rules. The Algorithms of this generation can be realized by simply modifying the algorithms we designed previously.

\section{Experiment\label{s5}}
Based on the algorithms we proposed, we calculate the number of injective patterns and extended patterns with neighbor size from $3$ to $10$. The results are shown in Table \ref{tab:injective_extended_res}. 

\begin{table}[ht]
    \centering
    \caption{The number of results with different neighbor size.}
    \label{tab:injective_extended_res}
    \begin{tabular}{c|c|c|c}
    \hline
        \makecell[c]{Neighbor size} & 
        \makecell[c]{Injective rules for \\ injective patterns}  &
        \makecell[c]{Injective rules for \\ extended patterns} & 
        \makecell[c]{Total injective rules \\ (except trivial)} \\ 
    \hline  
        3 & 0 & 0 & 0 \\
        4 & 4 & 0 & 4 \\
        5 & 14 & 8 & 26 \\
        6 & 52 & 40 & Can't calculate \\
        7 & 148 & 162 & Can't calculate \\
        8 & 408 & 528 & Can't calculate \\
        9 & 1040 & 1562 & Can't calculate \\
        10 & 2556 & 4268 & Can't calculate \\
    \hline
    \end{tabular}
\end{table}

As we can see, we can specifically calculate the number of injective patterns and extended patterns in different neighbor sizes. In neighbor size $4$, there will be $4$ injective rules (except trivial rules), and in neighbor size $5$, the number will be $26$ (except trivial rules), while our methods can find $4$ injective rules in neighbor size $4$ and $22$ injective rules in neighbor size $5$. When it comes to neighbor size $6$ or more, in traditional algorithms that aim to find a total number of injective rules, the amount of calculation explodes to $2^{64}$ in neighbor size $6$ and $2^{128}$ in neighbor size $7$, so we can not get the final results.

Besides, we only list the injective patterns and the extended patterns here, since patterns mixture can be a combination problem with number based on the total number of injective patterns and extended patterns. So, the actual number of injective rules our methods can find will be greater than what we show in Table \ref{fig:Injective_patterns_in_PM}.

\section{Conclusion and Future Work\label{s6}}
We proposed a patterns-based method that can efficiently generate injective local rule $f$, which constrains the number of configurations of the CA to be $2$. In our method, we simply design the elements structure of the mapping string and prove the structure is stable in any mapping process, which enables us to induce injective local rule $f$. Besides, the global transformation defined by the rule $f$ holds $\tau(\tau(c)) = shift_{k}(c)$, which means the CA with the rule $f$ will only have $2$ configuration, so the CA is reversible. 

Although the CA we discuss are one-dimensional CA over $\mathbb{F}_2$, we can easily extend our method to CA over $\mathbb{F}_p$ or even two-dimensional CA, which can be our future work. Furthermore, we found all ``patterns'' induced rules have defined $\tau(\tau(c)) = \tau^{2}(c) = shift_{k}(c)$, which constrains the number of configurations of CA to be $2$, but there are many reversible CA with multi-configuration more than $2$. Thus, another future work of ours is to extend the number $2$ to be $k$ which can include all injective situations. It is worth mentioning that implementing ``Patterns’’ to study the reversibility of CA is a brand new idea, which can help us efficiently generate a lot of reversible CA. 





\bibliographystyle{elsarticle-harv}
\bibliography{fbref}







\end{document}